\documentclass[11pt]{article}

\usepackage{amsmath, amsthm, amssymb}
\usepackage[hidelinks]{hyperref}
\usepackage{geometry}
\usepackage{authblk}
\usepackage{graphicx}
\usepackage{booktabs}

\newtheorem{theorem}{Theorem}
\newtheorem{proposition}{Proposition}
\newtheorem{lemma}{Lemma}
\newtheorem{corollary}{Corollary}
\newtheorem{definition}{Definition}
\newtheorem{remark}{Remark}

\newcommand{\bfa}{\mathbf{a}}

\newcommand{\intro}{\mathrm{intro}}
\newcommand{\pC}{p_C}

\title{Introspection Dynamics with Mutation in Additive Games}
\author[1]{Harry Foster}
\author[1,*]{Vince Knight}
\author[2]{Sebastian Krapohl}
\affil[1]{School of Mathematics, Cardiff University}
\affil[2]{Faculty of Social and Behavioural Sciences, University of Amsterdam}
\affil[*]{Corresponding author: \texttt{knightva@cardiff.ac.uk}}
\date{\today}

\begin{document}

\maketitle

\begin{abstract}
Cooperation in heterogeneous groups, where individuals differ in resources, productivity,
and behavioural responsiveness, underpins collective action across many social and
biological systems. Introspection dynamics, in which each player compares their payoff to
what they would have received under the alternative action, provides a natural learning
rule for such asymmetric settings. Couto and Pal showed that for \emph{additive}
games, those in which the payoff difference a player evaluates when considering a
switch is independent of the other players' actions, the stationary distribution of
introspection dynamics is a product measure. We extend this result to introspection
dynamics with \emph{mutation}, where a selected player switches to a random action
with some probability independent of payoffs, and with player-specific selection
intensities. We show that the product-measure structure is preserved, and we obtain
the explicit per-player cooperation probability
\(p_i=\phi_i(\delta_i)(1-\mu_{i0}-\mu_{i1})+\mu_{i0}\). As our main application we
consider the heterogeneous public goods game, where \(N\) players may differ in their
contributions \(\alpha_i\), public goods multipliers \(r_i\), and selection
intensities \(\beta_i\); the long-run cooperation probability admits the closed form
\[
  \pC = \frac{1}{N}\sum_{i=1}^{N}
        \left[\frac{1-\mu_{i0}-\mu_{i1}}{1+e^{\,\beta_i\alpha_i(1-r_i/N)}}+\mu_{i0}\right].
\]
Several structural consequences follow: a player-specific cooperation threshold at
\(r_i = N\) under symmetric mutation, a neutral-drift regime in which cooperation is
governed entirely by mutation bias, and a mutation-selection balance in which aggregate
cooperation is affine in the mutation rate, interpolating between the selection-driven
level and neutrality. Mutation also regularises the strong-selection limit, so the
closed form remains valid as \(\beta_i\to\infty\), where the mutation-free dynamics
degenerate.
\end{abstract}

\section{Introduction}

The public goods game is a canonical model of cooperation in evolutionary game theory:
players decide whether to contribute to a common resource, which is multiplied and
redistributed, creating a tension between individual incentive and collective
benefit~\cite{nowak2006five}. Classical analyses assume a homogeneous population, but
real groups are heterogeneous: players differ in their productivity, their contributions,
and how sensitively they respond to payoff differences~\cite{hauser2019social}.

A natural learning rule for such settings is \emph{introspection dynamics}, introduced
in~\cite{couto2022introspection}: at each time step a player
compares their current payoff to the payoff they \emph{would have received} had they
played the alternative action (keeping all other players fixed), and switches with a
probability governed by this difference. Because the comparison is
self-referential rather than imitative, introspection applies to arbitrary asymmetric
games where role-model imitation is unavailable or unnatural.

In~\cite{couto2023multiplayer} the authors extended introspection dynamics to multiplayer
asymmetric games. For a general game the transition rate for player \(i\) switching
depends on the actions of all other players through the payoff function, so the
stationary distribution must be computed by solving a linear system of dimension
\(2^N\). They identified an important exception: for \emph{additive} games, in which the
payoff difference a player evaluates when switching action is independent of the other
players' actions (a property also known as \emph{equal gains from switching}), the
stationary distribution factorises as a product measure over the players
(\cite{couto2023multiplayer}; Proposition~2), and for the linear public goods game it
admits a closed form (\cite{couto2023multiplayer}; Proposition~3).

The introspection model of~\cite{couto2022introspection,couto2023multiplayer} contains no
mutation: a selected player switches action solely according to the Fermi rule. Mutation,
a random action chosen independently of payoffs, is a standard ingredient of evolutionary
and learning dynamics~\cite{fudenberg2006evolutionary} and accounts for exploration,
error, and behavioural noise. In this
paper we incorporate per-player, per-action mutation into introspection dynamics and show
that the additive-game results of~\cite{couto2023multiplayer} persist: the stationary
distribution remains a product measure, and each player's long-run cooperation probability
has the explicit form \(p_i=\phi_i(\delta_i)(1-\mu_{i0}-\mu_{i1})+\mu_{i0}\). We also allow
player-specific selection intensities \(\beta_i\) rather than a single global value.
Specifically, we extend the product-measure result to mutation (Section~\ref{sec:decomp}),
record the structural consequences that hold for any additive game
(Section~\ref{sec:consequences}), and specialise them to the heterogeneous public goods
game (PGG), for which additivity holds and the closed-form cooperation probability follows
(Section~\ref{sec:payoff}).

The paper is organised as follows. Section~\ref{sec:setup} introduces the framework, the
additive-game condition, and introspection dynamics with mutation, noting that whilst the
donation-form Prisoner's Dilemma is additive, the Stag Hunt and the classical (RPST)
Prisoner's Dilemma are not.
Section~\ref{sec:decomp} shows that, with mutation, individual cooperation probabilities
are given by~\eqref{eq:pi} and the stationary distribution remains a product measure.
Section~\ref{sec:consequences} records the structural consequences that hold for any
additive game: a neutral-drift regime, a strong-selection limit, a cooperation threshold,
and a mutation-selection balance.
Section~\ref{sec:payoff} specialises these to the heterogeneous PGG, establishing
additivity and deriving the closed-form cooperation probability.
Section~\ref{sec:conclusion} concludes.

\section{Setup and notation}\label{sec:setup}

Consider \(N\) ordered players, each choosing from two actions \(\{C,D\}\) coded as
\(\{1,0\}\). The state space is \(S=\{0,1\}^N\) with \(|S|=2^N\). Each player \(i\)
has a payoff function \(f_i:S\to\mathbb{R}\) and player-specific selection intensity
\(\beta_i>0\). Writing \(\bfa_{i\to\bar{a}_i}\) for the state with player \(i\)'s
action flipped to \(\bar{a}_i=1-a_i\), the \emph{payoff difference} for player \(i\)
at state \(\bfa\) is

\[
  \Delta f_i(\bfa) := f_i(\bfa) - f_i(\bfa_{i\to\bar{a}_i}).
\]

Following~\cite{couto2023multiplayer}, a game is \emph{additive} for player \(i\) if
\(\Delta f_i(\bfa)\) depends only on \(a_i\), not on \(\bfa_{-i}\), the actions of all
individuals other than \(i\); equivalently, the payoff difference a player evaluates when
considering a switch is independent of the co-players' actions, a property also known as
\emph{equal gains from switching}. When this holds there is a constant
\(\delta_i\in\mathbb{R}\) such that

\begin{equation}\label{eq:delta}
  \Delta f_i(\bfa) = (1-2a_i)\,\delta_i,
\end{equation}

where \(\delta_i = \Delta f_i\!\big|_{a_i=0}\) is the payoff difference when player
\(i\) defects. When \(a_i=0\), \eqref{eq:delta} follows directly from the definition of
\(\delta_i\); when \(a_i=1\), \(\Delta f_i\!\big|_{a_i=1}=-\Delta f_i\!\big|_{a_i=0}=-\delta_i\),
giving \eqref{eq:delta}.

Note that not all games are \emph{additive}. For
example in~\cite{couto2022introspection} two of the two-player, two-strategy games
considered to illustrate introspection dynamics are the Prisoner's Dilemma \(M_1\) and
the Stag Hunt \(M_2\) of~\eqref{eq:games_from_couto_2022}, for \(0 < c_1, c_2 < b\).

\begin{equation}\label{eq:games_from_couto_2022}
    M_1 = \begin{pmatrix}
        b - c_1, b - c_2 & - c_1, b\\
        b, - c_2 & 0, 0\\
    \end{pmatrix}
    \qquad
    M_2 = \begin{pmatrix}
        b - c_1, b - c_2 & - c_1, 0\\
        0, - c_2 & 0, 0\\
    \end{pmatrix}
\end{equation}

For \(M_1\): \(\delta_1=c_1\) and \(\delta_2=c_2\) however for \(M_2\), \(\Delta
f_1((1, 1)) = b-c_1\) but \(\Delta f_1((1, 0)) = -c_1\). The Prisoner's Dilemma
(as defined by \(M_1\)) is \emph{additive} but the Stag Hunt game (as
defined by \(M_2\)) is not. Further games considered
in~\cite{couto2022introspection} are the Volunteer's dilemma and the Volunteer's
timing dilemma (a generalisation with more than 2 strategies) neither of which are
\emph{additive}.

More generally, the classical Prisoner's Dilemma is parameterised by the reward
\(R\), temptation \(T\), sucker payoff \(S\), and punishment \(P\) with
\(T > R > P > S\) and the social-efficiency condition \(2R > T + S\).
Additivity for player 1 requires the \emph{equality}
\[
  T - R \;=\; P - S \quad \text{(equivalently } P + R = S + T\text{)},
\]
that is, the gain from defecting against a cooperator must equal the gain from
defecting against a defector. The PD axioms are strict inequalities and do not
force this equality; a generic RPST Prisoner's Dilemma therefore fails it. For
instance, \(T=5\), \(R=3\), \(P=1\), \(S=0\) satisfies \(T>R>P>S\) and
\(2R=6>5=T+S\), yet \(T-R=2\neq 1=P-S\). The donation form \(M_1\) above is the
exceptional case where the equality holds by construction (both differentials
equal \(c\)).

To define introspection dynamics we introduce the following two quantities:

\begin{itemize}
    \item \textbf{Fermi function.} Each player \(i\) has a personal Fermi function
\(\phi_i(x)=(1+e^{\beta_i x})^{-1}\). Since \(\beta_i>0\), the exponential
\(e^{\beta_i x}\) is strictly increasing in \(x\), so \(\phi_i\) is strictly
decreasing with \(\phi_i(0)=\tfrac{1}{2}\) and \(\phi_i(x)+\phi_i(-x)=1\).
    \item \textbf{Mutation.} Each player \(i\) has mutation probabilities
\(\mu_{i0},\mu_{i1}>0\) with \(\mu_{i0}+\mu_{i1}<1\), where \(\mu_{i0}\) is the
probability of mutating to cooperation and \(\mu_{i1}\) is the probability of
mutating to defection, regardless of the intended choice of the dynamic.
\end{itemize}

At each step, one player \(i\) is selected
uniformly at random. Player \(i\) considers switching from action \(a_i\) to the
alternative \(\bar{a}_i=1-a_i\). The transition probability for the switch is

\begin{equation}\label{eq:intro-transition}
  T^{\intro}_{\bfa,\,\bfa_{i\to\bar{a}_i}}
  = \frac{1}{N}\bigl[(1-\mu_{i0}-\mu_{i1})\phi_i(\Delta f_i(\bfa))+\mu_{i,\bar{a}_i}\bigr],
\end{equation}

where \(\mu_{i,\bar{a}_i}\) denotes \(\mu_{i0}\) when
\(\bar{a}_i=C\) and \(\mu_{i1}\) when \(\bar{a}_i=D\). Since \(\phi_i\) is strictly
decreasing, a larger payoff gain from keeping the current action yields a smaller
switching probability. Setting \(\mu_{i0}=\mu_{i1}=0\) recovers the introspection
dynamics of~\cite{couto2022introspection,couto2023multiplayer}; the mutation terms add a
payoff-independent probability of switching to each action.

\begin{definition}[Cooperation probability]
Given an ergodic chain on \(S\) with stationary distribution \(\pi\), the
\emph{cooperation probability} is \(p_C=\pi\cdot s\), where
\(s_\bfa=\frac{1}{N}\sum_i a_i\) is the fraction of cooperators in state \(\bfa\).
\end{definition}

\section{Individual cooperation probabilities}\label{sec:decomp}

For additive games without mutation the stationary distribution of introspection
dynamics is a product measure (\cite{couto2023multiplayer}; Proposition~2). We show
that this structure persists with mutation, with a short direct
argument: for an additive game, \(\Delta f_i(\bfa)\) depends only on \(a_i\), so each time
player \(i\) is selected their next action is \(C\) with a fixed probability \(p_i\)
regardless of the current state. The marginal cooperation probability and the product form
then follow immediately.

\begin{theorem}[Individual cooperation probabilities]\label{thm:decomp}

For any additive game with constants \(\delta_i\), under introspection
dynamics with player-specific selection intensity \(\beta_i\) and mutation probabilities
\(\mu_{i0},\mu_{i1}>0\) with \(\mu_{i0}+\mu_{i1}<1\), the long-run probability that
player \(i\) cooperates is

\begin{align}
  p_i &= \phi_i(\delta_i)(1-\mu_{i0}-\mu_{i1})+\mu_{i0}. \label{eq:pi}
\end{align}
\end{theorem}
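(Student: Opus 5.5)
The plan is to show that, once player \(i\) is selected, the action it ends with does not depend on the current state. This makes the \(i\)-coordinate of the chain a two-state Markov chain of its own. I will compute the post-selection action in the two cases \(a_i=0\) and \(a_i=1\), using \eqref{eq:delta} and \eqref{eq:intro-transition}.

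If \(a_i=0\), then \(\Delta f_i(\bfa)=\delta_i\). Conditional on \(i\) being selected, the switch to \(C\) happens with probability \((1-\mu_{i0}-\mu_{i1})\phi_i(\delta_i)+\mu_{i0}\), which is exactly the right-hand side of \eqref{eq:pi}. If \(a_i=1\), then \(\Delta f_i(\bfa)=-\delta_i\). The switch to \(D\) happens with probability \((1-\mu_{i0}-\mu_{i1})\phi_i(-\delta_i)+\mu_{i1}\). So the player stays at \(C\) with probability
\[
1-(1-\mu_{i0}-\mu_{i1})(1-\phi_i(\delta_i))-\mu_{i1}=(1-\mu_{i0}-\mu_{i1})\phi_i(\delta_i)+\mu_{i0},
\]
where I use \(\phi_i(x)+\phi_i(-x)=1\). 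In both cases, then, a selected player \(i\) ends the step cooperating with the same probability \(p_i\), independent of \(\bfa\) (both of \(a_i\) and of \(\bfa_{-i}\), the latter by additivity).

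Next I verify the stationary distribution directly. Let \(\pi(\bfa)=\prod_j p_j^{a_j}(1-p_j)^{1-a_j}\). One step of the chain selects \(i\) with probability \(1/N\) and then resamples coordinate \(i\) from Bernoulli\((p_i)\), independently of everything else. That resampling leaves \(\pi\) invariant, because \(\pi\) already has an independent Bernoulli\((p_i)\) \(i\)-coordinate. A mixture of \(\pi\)-invariant kernels is \(\pi\)-invariant, so \(\pi\) is stationary.

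Finally I need uniqueness and ergodicity. The hypotheses \(\mu_{i0},\mu_{i1}>0\) and \(\mu_{i0}+\mu_{i1}<1\) give \(0<\mu_{i0}\le p_i\le 1-\mu_{i1}<1\). So every single-coordinate flip has positive probability, which makes the chain irreducible on \(S\). Self-loops also have positive probability, so the chain is aperiodic. Hence \(\pi\) is the unique stationary distribution, and the long-run probability that player \(i\) cooperates is its marginal, \(p_i\). The only delicate step is the \(a_i=1\) case, where the symmetry of \(\phi_i\) is what makes the two cases coincide. The rest is bookkeeping. As a corollary, \(p_C=\frac1N\sum_i p_i\).
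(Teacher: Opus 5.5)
Your proof is correct. It shares its key step with the paper, but reaches the long-run conclusion by a different route.

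The first half is the same as the paper's argument. You show that a selected player $i$ ends the step at $C$ with probability $p_i$ regardless of $\bfa$, using $\phi_i(x)+\phi_i(-x)=1$ in the $a_i=1$ case.

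The difference is how you pass to the long run. The paper stays with player $i$'s own coordinate. Every selection of $i$ resamples $a_i$ from a Bernoulli$(p_i)$ law, so the conclusion can be read off directly. Implicitly, $\Pr(a_i(t)=1)=p_i\bigl(1-(1-\tfrac1N)^t\bigr)+a_i(0)(1-\tfrac1N)^t\to p_i$. This needs no joint stationary distribution, no irreducibility and no aperiodicity.

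You instead write the one-step kernel as the mixture $\frac1N\sum_i K_i$ of coordinate-resampling kernels. You check that the product measure is invariant under each $K_i$, invoke irreducibility and aperiodicity for uniqueness and convergence, and then take the marginal. For the marginal statement alone this is heavier than necessary. However, it proves Theorem~\ref{thm:stationary} at the same time, and more rigorously than the paper does. The paper's proof of the product form (``in stationarity the players' actions are therefore independent'') leaves your invariance-plus-uniqueness step implicit.

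A minor point: you do not need positive mutation rates for irreducibility. The facts $0<\phi_i(\delta_i)<1$ and $1-\mu_{i0}-\mu_{i1}>0$ already give $0<p_i<1$.
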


\begin{proof}

By additivity, \(\Delta f_i(\bfa)=(1-2a_i)\delta_i\) depends only on
    \(a_i\). When player \(i\) is selected, the probability their next action is
    \(C\) equals \(p_i\) regardless of their current action: if \(a_i=D\), the
    switch probability is
    \(p_i=\phi_i(\delta_i)(1-\mu_{i0}-\mu_{i1})+\mu_{i0}\); if \(a_i=C\), the
    probability of remaining at \(C\) is
    \(1-[\phi_i(-\delta_i)(1-\mu_{i0}-\mu_{i1})+\mu_{i1}]\), using
    \(\phi_i(x)+\phi_i(-x)=1\) this gives: \(1-[(1 -
    \phi_i(\delta_i))(1-\mu_{i0}-\mu_{i1})+\mu_{i1}]=p_i\). Since each selection
    of player \(i\) places them in state \(C\) with probability \(p_i\)
    independently of their current state, the long-run cooperation probability
    of player \(i\) is \(p_i\).

\end{proof}

Additivity makes a player's choice forgetful: each time player \(i\) is selected they
cooperate with the same probability \(p_i\), whatever the current state. This probability
is the payoff-driven Fermi value \(\phi_i(\delta_i)\), scaled by the chance of no mutation
and shifted by the chance of mutating to cooperation.

\begin{theorem}[Stationary distribution]\label{thm:stationary}

For any additive game with constants \(\delta_i\), the stationary
distribution of the introspection chain on $S=\{0,1\}^N$ is the product measure
\begin{equation}\label{eq:stationary}
  \pi_{\bfa} = \prod_{i=1}^{N} p_i^{\,a_i}(1-p_i)^{1-a_i}, \qquad \bfa\in S,
\end{equation}
where $p_i$ is given by~\eqref{eq:pi}.

\end{theorem}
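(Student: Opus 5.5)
The plan is to verify stationarity directly through detailed balance, and then obtain uniqueness from ergodicity. The chain only moves between states that differ in one coordinate, so it suffices to check, for each state \(\bfa\) and each player \(i\), that
\[
  \pi_{\bfa}\,T^{\intro}_{\bfa,\bfa_{i\to\bar a_i}} = \pi_{\bfa_{i\to\bar a_i}}\,T^{\intro}_{\bfa_{i\to\bar a_i},\bfa}.
\]
Without loss of generality take \(a_i=0\), so that \(\bfa_{i\to\bar a_i}\) is \(\bfa\) with \(a_i\) set to \(1\). Under the proposed product measure~\eqref{eq:stationary} the two states share every factor except the \(i\)-th. The \(i\)-th factor is \(1-p_i\) at \(\bfa\) and \(p_i\) at the flipped state. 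Cancelling the common factor \(\prod_{j\neq i}p_j^{a_j}(1-p_j)^{1-a_j}\) and the prefactor \(1/N\), the condition reduces to
\[
  (1-p_i)\bigl[(1-\mu_{i0}-\mu_{i1})\phi_i(\delta_i)+\mu_{i0}\bigr]
  = p_i\bigl[(1-\mu_{i0}-\mu_{i1})\phi_i(-\delta_i)+\mu_{i1}\bigr].
\]
Here additivity~\eqref{eq:delta} has been used to replace \(\Delta f_i\) by \(\delta_i\) at \(\bfa\) and by \(-\delta_i\) at the flipped state. This replacement is the only place where additivity enters, and it is exactly what makes the reduced condition free of \(\bfa_{-i}\).

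Next I would observe, as in the proof of Theorem~\ref{thm:decomp}, that the bracket on the left is exactly \(p_i\). Using \(\phi_i(-\delta_i)=1-\phi_i(\delta_i)\), the bracket on the right equals \(1-p_i\). Both sides are therefore \(p_i(1-p_i)\), so detailed balance holds and \(\pi\) is stationary. One should also check that \(\pi\) is a genuine probability distribution. It is a product of Bernoulli laws, so it sums to one, and it is strictly positive because \(0<\mu_{i0}\le p_i\le 1-\mu_{i1}<1\).

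For uniqueness, I would argue that the chain is irreducible. The conditions \(\mu_{i0},\mu_{i1}>0\) make every single-coordinate flip occur with positive probability, and any two states of \(\{0,1\}^N\) are joined by a sequence of such flips. The chain is also aperiodic, since staying put has positive probability: when player \(i\) is selected at \(a_i=0\), the chance of not switching is \(1-p_i>0\). A finite irreducible chain has a unique stationary distribution, so \(\pi\) is that distribution.

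I do not expect a real obstacle here. The only delicate point is bookkeeping: making sure the transition rate in the reverse direction uses \(\mu_{i1}\) and \(\phi_i(-\delta_i)\) with the correct sign convention from~\eqref{eq:intro-transition}. An alternative route would be a coupling argument built on the "forgetful" observation behind Theorem~\ref{thm:decomp}. Once every player has been selected, the coordinates are independent with marginals \(p_i\), and the product law then persists. I would keep the detailed-balance computation as the main proof because it is fully explicit.
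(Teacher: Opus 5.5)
Your proof is correct, but it takes a different route from the paper. The paper verifies no balance equations. It argues directly from Theorem~\ref{thm:decomp}: every selection of player $i$ resets their action to $C$ with probability $p_i$, whatever the full current state. So each step of the chain just redraws one coordinate from a fixed Bernoulli law, independently of the others, and the coordinates are therefore independent with marginals $p_i$. This is essentially the ``forgetful'' coupling you mention at the end as an alternative. The paper's argument is shorter and explains \emph{why} the product form appears. Made precise, it says that if $\bfa$ is product-distributed, a step replaces coordinate $i$ by a fresh Bernoulli$(p_i)$ draw independent of $\bfa_{-i}$, so the product law is preserved. It also shows that from any initial state the law is exactly $\pi$ once every player has been selected at least once. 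As written, however, the paper leaves implicit the step from ``each selection is a fresh coin'' to ``independent in stationarity''. It also does not prove uniqueness, relying instead on the ergodicity assumed in its definition of the cooperation probability. Your detailed-balance computation is fully explicit and shows exactly where additivity enters: the reverse bracket $(1-\mu_{i0}-\mu_{i1})\phi_i(-\delta_i)+\mu_{i1}$ equals $1-p_i$ regardless of $\bfa_{-i}$. It also supplies the irreducibility argument for uniqueness, and it gives the extra fact, not recorded in the paper, that the chain is reversible with respect to $\pi$.
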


\begin{proof}

By Theorem~\ref{thm:decomp}, each selection of player $i$ places them in state $C$
with probability $p_i$ independently of every other player's action. In stationarity
the players' actions are therefore independent, with player $i$ cooperating with
probability $p_i$ and defecting with probability $1-p_i$. The probability of state
$\bfa$ is the product of these marginals, giving~\eqref{eq:stationary}.

\end{proof}

In the long run the players' actions are therefore independent: the group behaves as
\(N\) separate biased coins, with player \(i\) showing cooperation with probability
\(p_i\). Taking the expectation of the cooperator fraction \(s\) under this product
measure, the cooperation probability is the mean of the individual probabilities,
\begin{equation}\label{eq:pC-mean}
  \pC = \frac{1}{N}\sum_{i=1}^{N} p_i.
\end{equation}

For \(\mu_{i0}=\mu_{i1}=0\), Theorem~\ref{thm:stationary} is Proposition~2
of~\cite{couto2023multiplayer}; the argument above shows that the product-measure
structure is unaffected by mutation, since mutation alters only the per-player marginal
\(p_i\) and not the independence between players.

\begin{remark}[The two-player donation game]
For the Prisoner's Dilemma \(M_1\) (\ref{eq:games_from_couto_2022})
of~\cite{couto2022introspection} the payoff \(f_1(\bfa)=-c_1 a_1+b a_2\) is additive
with \(\delta_1=c_1\), and similarly \(\delta_2=c_2\). Theorems~\ref{thm:decomp}
and~\ref{thm:stationary} give the per-player cooperation probability and the
product-measure stationary distribution
\[
  p_i = \frac{1-\mu_{i0}-\mu_{i1}}{1+e^{\beta_i c_i}}+\mu_{i0}, \qquad
  \pi_{\bfa} = \prod_{i=1}^{2} p_i^{\,a_i}(1-p_i)^{1-a_i}.
\]
Setting \(\mu_{i0}=\mu_{i1}=0\) and a common selection intensity \(\beta_i=\beta\)
recovers \(p_i=(1+e^{\beta c_i})^{-1}\) and
\[
  \pi = \frac{1}{(1+e^{\beta c_1})(1+e^{\beta c_2})}
        \bigl(1,\,e^{\beta c_2},\,e^{\beta c_1},\,e^{\beta(c_1+c_2)}\bigr)
  \quad\text{over } (CC,CD,DC,DD),
\]
the stationary distribution obtained by~\cite{couto2022introspection} for this game; they
observed that their two-player stationary distribution factorises precisely under the
additive condition, of which the donation game is an instance. Mutation preserves the
product form while shifting each marginal: with \(b=1\), \(c_1=0.6\), \(c_2=0.1\),
\(\beta=5\), and asymmetric mutation \(\mu_{i0}=0.05\), \(\mu_{i1}=0.15\) we obtain
\(p_1\approx0.088\) and \(p_2\approx0.352\), giving \(\pi_{DD}\approx0.591\),
\(\pi_{DC}\approx0.321\), \(\pi_{CD}\approx0.057\), and \(\pi_{CC}\approx0.031\).
\end{remark}

\section{Structural consequences}\label{sec:consequences}

The formula~\eqref{eq:pi} for the individual cooperation probability allows immediate
structural conclusions that hold for any additive game, before any specific payoff
structure is imposed. We record four: the two limits of the selection intensity, a
cooperation threshold, and the effect of mutation on aggregate cooperation. Without
mutation, several reduce to properties already observed
by~\cite{couto2023multiplayer}.

\begin{remark}[Neutral drift]\label{rem:neutral}

If \(\beta_i=0\) for all \(i\) then \(\phi_i\equiv\tfrac{1}{2}\), so the payoff parameters
\(\delta_i\) drop out of~\eqref{eq:pi} and cooperation is set entirely by mutation:
\[
  p_i = \frac{1+\mu_{i0}-\mu_{i1}}{2}, \qquad
  \pC = \frac{1}{2}+\frac{1}{2N}\sum_{i=1}^{N}(\mu_{i0}-\mu_{i1}),
\]
so \(\pC=\tfrac{1}{2}\) exactly when \(\sum_i(\mu_{i0}-\mu_{i1})=0\).

\end{remark}

A player insensitive to payoff differences ignores the underlying game: their long-run
behaviour is set by their mutation bias alone, and group cooperation reduces to the
average asymmetry between the mutation rates.

\begin{proposition}[Strong selection]\label{prop:strong}

For each player \(i\), \eqref{eq:pi} confines the cooperation probability to the open
interval \((\mu_{i0},\,1-\mu_{i1})\), and the endpoints are the strong-selection limits:
as \(\beta_i\to\infty\),
\[
  p_i \to \mu_{i0} \ \text{when}\ \delta_i>0, \qquad
  p_i \to 1-\mu_{i1} \ \text{when}\ \delta_i<0.
\]

\end{proposition}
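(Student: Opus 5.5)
The plan is to read everything off the explicit formula \eqref{eq:pi} from Theorem~\ref{thm:decomp}, writing it as an affine function of the Fermi value:
\[
  p_i = \mu_{i0} + (1-\mu_{i0}-\mu_{i1})\,\phi_i(\delta_i).
\]
Since \(\mu_{i0}+\mu_{i1}<1\), the coefficient \(1-\mu_{i0}-\mu_{i1}\) is strictly positive. The map \(t\mapsto \mu_{i0}+(1-\mu_{i0}-\mu_{i1})t\) is therefore strictly increasing. It sends \(t=0\) to \(\mu_{i0}\) and \(t=1\) to \(1-\mu_{i1}\).

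\textbf{Confinement.} First I show \(\phi_i(\delta_i)\in(0,1)\). Because \(e^{\beta_i\delta_i}\in(0,\infty)\) for every real \(\delta_i\) and \(\beta_i>0\), we have \(1+e^{\beta_i\delta_i}\in(1,\infty)\), so its reciprocal lies strictly between \(0\) and \(1\). Applying the strictly increasing affine map then places \(p_i\) in the open interval \((\mu_{i0},1-\mu_{i1})\). This interval is nonempty precisely because \(\mu_{i0}<1-\mu_{i1}\).

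\textbf{Limits.} Hold \(\delta_i\), \(\mu_{i0}\) and \(\mu_{i1}\) fixed and let \(\beta_i\to\infty\).
\begin{itemize}
  \item If \(\delta_i>0\), then \(\beta_i\delta_i\to+\infty\), so \(e^{\beta_i\delta_i}\to\infty\) and \(\phi_i(\delta_i)\to0\).
  \item If \(\delta_i<0\), then \(e^{\beta_i\delta_i}\to0\) and \(\phi_i(\delta_i)\to1\).
\end{itemize}
The affine map is continuous, so \(p_i\to\mu_{i0}\) in the first case and \(p_i\to1-\mu_{i1}\) in the second. These are exactly the two endpoints of the confining interval, which justifies calling them the strong-selection limits.

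There is no real obstacle. The only point needing care is that the coefficient of \(\phi_i\) must be strictly positive, which is what keeps the interval open and ordered; this is exactly the hypothesis \(\mu_{i0}+\mu_{i1}<1\). I would also remark that when \(\delta_i=0\) we have \(p_i=(1+\mu_{i0}-\mu_{i1})/2\) for every \(\beta_i\), which is why that case is excluded from the limit statement.
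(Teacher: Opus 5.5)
Your proposal is correct and takes essentially the same approach as the paper. The paper's proof likewise bounds \(\phi_i(\delta_i)\) strictly between \(0\) and \(1\) and uses the strictly positive coefficient \(1-\mu_{i0}-\mu_{i1}\) to get the open interval, then takes the limits of the Fermi function as \(\beta_i\to\infty\) according to the sign of \(\delta_i\); your side remark that \(p_i=(1+\mu_{i0}-\mu_{i1})/2\) for every \(\beta_i\) when \(\delta_i=0\) is also correct.
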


\begin{proof}

Since \(0<\phi_i(\delta_i)<1\) and \(1-\mu_{i0}-\mu_{i1}>0\), \eqref{eq:pi} gives
\(\mu_{i0}<p_i<1-\mu_{i1}\). As \(\beta_i\to\infty\),
\(\phi_i(\delta_i)=(1+e^{\beta_i\delta_i})^{-1}\to0\) when \(\delta_i>0\) and \(\to1\)
when \(\delta_i<0\), so \(p_i\to\mu_{i0}\) or \(p_i\to1-\mu_{i1}\) respectively.

\end{proof}

A unique stationary distribution requires finite \(\beta\)~\cite{couto2023multiplayer}.
With mutation the stationary distribution is a non-degenerate product
measure even in this best-response limit, whereas without mutation the chain degenerates
to a point mass at the dominant-action profile. Mutation therefore extends the exact
analysis to arbitrarily strong selection.

\begin{corollary}[Cooperation threshold]\label{cor:threshold}

Player \(i\) cooperates with probability greater than \(\tfrac{1}{2}\) if and only if
\[
  \phi_i(\delta_i) > \frac{\tfrac{1}{2}-\mu_{i0}}{1-\mu_{i0}-\mu_{i1}}.
\]
When \(\mu_{i0}=\mu_{i1}\) this reduces to \(\delta_i<0\). A sufficient condition for
\(\pC>\tfrac{1}{2}\) is that the inequality holds for every \(i\).

\end{corollary}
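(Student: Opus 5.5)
The plan is to read the threshold directly off the closed form \eqref{eq:pi} from Theorem~\ref{thm:decomp}, since every step is an algebraic manipulation of that expression. We will prove the three claims in the order stated: the general equivalence, then its symmetric-mutation specialisation, then the sufficient condition for \(\pC>\tfrac12\).

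For the first claim, write \(p_i=\phi_i(\delta_i)(1-\mu_{i0}-\mu_{i1})+\mu_{i0}\). The inequality \(p_i>\tfrac12\) is equivalent to
\[
\phi_i(\delta_i)(1-\mu_{i0}-\mu_{i1})>\tfrac12-\mu_{i0}.
\]
By hypothesis \(\mu_{i0}+\mu_{i1}<1\), so the factor \(1-\mu_{i0}-\mu_{i1}\) is strictly positive. Dividing by it therefore preserves the direction of the inequality and gives the stated condition. Both directions of the ``if and only if'' follow, because each step is a reversible equivalence. The only point needing care is this positivity, which is exactly what the hypothesis \(\mu_{i0}+\mu_{i1}<1\) supplies.

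For the symmetric case, set \(\mu_{i0}=\mu_{i1}=\mu\). The right-hand side becomes
\[
\frac{\tfrac12-\mu}{1-2\mu}=\tfrac12,
\]
using \(1-2\mu>0\). The condition then reads \(\phi_i(\delta_i)>\tfrac12=\phi_i(0)\). Since \(\phi_i\) is strictly decreasing (as noted when the Fermi function was introduced, using \(\beta_i>0\)), this holds if and only if \(\delta_i<0\).

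Finally, for the sufficient condition, suppose the inequality holds for every \(i\). By the first part, \(p_i>\tfrac12\) for all \(i\). Using \eqref{eq:pC-mean}, which follows from the product measure of Theorem~\ref{thm:stationary}, \(\pC\) is the mean of the \(p_i\), and a mean of numbers each exceeding \(\tfrac12\) exceeds \(\tfrac12\). I expect no real obstacle anywhere in the argument. The only points to state explicitly are the positivity of \(1-\mu_{i0}-\mu_{i1}\), the strict monotonicity of \(\phi_i\), and the fact that the condition is sufficient but not necessary for \(\pC>\tfrac12\), since large \(p_j\) can compensate for some \(p_i<\tfrac12\).
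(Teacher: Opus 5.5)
Your proposal is correct and follows the paper's proof essentially step for step: divide by the positive factor \(1-\mu_{i0}-\mu_{i1}\), observe that the right-hand side collapses to \(\tfrac12\) under symmetric mutation, and invoke strict monotonicity of \(\phi_i\) with \(\phi_i(0)=\tfrac12\). The claim about \(\pC\) then follows by averaging via \eqref{eq:pC-mean}, exactly as in the paper, and your remark that the condition is sufficient but not necessary is a correct aside.
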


\begin{proof}

Since \(1-\mu_{i0}-\mu_{i1}>0\),
\[
  p_i > \tfrac{1}{2}
  \iff (1-\mu_{i0}-\mu_{i1})\phi_i(\delta_i) > \tfrac{1}{2}-\mu_{i0}
  \iff \phi_i(\delta_i) > \frac{\tfrac{1}{2}-\mu_{i0}}{1-\mu_{i0}-\mu_{i1}}.
\]
When \(\mu_{i0}=\mu_{i1}\), the right-hand side equals \(\tfrac{1}{2}\), and since
\(\phi_i\) is strictly decreasing with \(\phi_i(0)=\tfrac{1}{2}\), this is equivalent
to \(\delta_i<0\). The claim about \(\pC\) follows by averaging~\eqref{eq:pC-mean}.

\end{proof}

With symmetric mutation a player cooperates the majority of the time precisely when the
payoff favours cooperation, \(\delta_i<0\); mutation does not move the threshold, it only
softens the transition across it. With asymmetric mutation the threshold can be met even
when the payoff favours defection. Substituting
\(\phi_i(\delta_i)=(1+e^{\beta_i\delta_i})^{-1}\), the condition \(p_i>\tfrac{1}{2}\) is
equivalent to
\[
  \beta_i\delta_i < \ln\frac{\tfrac{1}{2}-\mu_{i1}}{\tfrac{1}{2}-\mu_{i0}}.
\]
The right-hand side is positive precisely when the mutation is biased towards cooperation
(\(\mu_{i0}>\mu_{i1}\)). Thus when the payoff favours defection (\(\delta_i>0\)) a
sufficiently strong cooperative bias still yields \(p_i>\tfrac{1}{2}\); conversely, when
the payoff favours cooperation (\(\delta_i<0\)) a defection bias (\(\mu_{i1}>\mu_{i0}\))
can drive \(p_i<\tfrac{1}{2}\).

\begin{proposition}[Mutation-selection balance]\label{prop:balance}

Under common symmetric mutation \(\mu_{i0}=\mu_{i1}=\mu\) for all \(i\), with
\(0\le\mu\le\tfrac{1}{2}\), the aggregate cooperation probability is affine in \(\mu\),
\[
  \pC = (1-2\mu)\,\Phi + \mu, \qquad
  \Phi := \frac{1}{N}\sum_{i=1}^{N}\phi_i(\delta_i),
\]
so that \(\pC=\Phi\) at \(\mu=0\), \(\pC=\tfrac{1}{2}\) at \(\mu=\tfrac{1}{2}\), and
\[
  \frac{\partial\pC}{\partial\mu} = 1-2\Phi.
\]
Mutation raises aggregate cooperation when \(\Phi<\tfrac{1}{2}\) and lowers it when
\(\Phi>\tfrac{1}{2}\).

\end{proposition}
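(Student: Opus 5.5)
The plan is to substitute the common symmetric mutation rates into the per-player formula~\eqref{eq:pi} of Theorem~\ref{thm:decomp} and then average using~\eqref{eq:pC-mean}. With \(\mu_{i0}=\mu_{i1}=\mu\), each marginal becomes
\[
  p_i=(1-2\mu)\,\phi_i(\delta_i)+\mu .
\]
Averaging over \(i\) is linear, so
\[
  \pC=\frac1N\sum_i p_i=(1-2\mu)\Phi+\mu ,
\]
which is the claimed affine form. Note that \(\Phi\) does not depend on \(\mu\), since each \(\phi_i(\delta_i)\) involves only \(\beta_i\) and \(\delta_i\).

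The endpoint values and the derivative then follow by direct evaluation. At \(\mu=0\) the formula gives \(\pC=\Phi\). At \(\mu=\tfrac12\) the coefficient \(1-2\mu\) vanishes, so \(\pC=\tfrac12\), consistent with Remark~\ref{rem:neutral}. Differentiating the affine expression in \(\mu\) gives \(\partial\pC/\partial\mu=-2\Phi+1\). The sign statement is immediate from this: the derivative is positive exactly when \(\Phi<\tfrac12\) and negative exactly when \(\Phi>\tfrac12\), and since \(\pC\) is affine in \(\mu\) this sign holds uniformly across the interval.

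The only delicate point is that the proposition allows the endpoints \(\mu=0\) and \(\mu=\tfrac12\), whereas Theorem~\ref{thm:decomp} was stated for \(\mu_{i0},\mu_{i1}>0\) with \(\mu_{i0}+\mu_{i1}<1\). I would handle these cases separately.

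For \(\mu=0\), I would invoke the mutation-free result (Proposition~2 of~\cite{couto2023multiplayer}), which applies because finite \(\beta_i\) keeps the chain ergodic. Alternatively, the argument in the proof of Theorem~\ref{thm:decomp} never used positivity of the mutation rates, only \(0\le 1-\mu_{i0}-\mu_{i1}\) and ergodicity.

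For \(\mu=\tfrac12\), the transition probability~\eqref{eq:intro-transition} reduces to \(\tfrac1N\cdot\tfrac12\) for every flip. The chain is then a lazy random walk on the hypercube, which is irreducible with uniform stationary distribution, so \(p_i=\tfrac12\) directly and the formula still holds.
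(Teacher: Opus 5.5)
Your proposal is correct and matches the paper's proof. Substitute \(\mu_{i0}=\mu_{i1}=\mu\) into \eqref{eq:pi}, average via \eqref{eq:pC-mean}, and read off the endpoints and slope from the affine form. Your separate treatment of \(\mu=0\) and \(\mu=\tfrac12\) is a valid extra step the paper skips: these values lie outside the hypotheses \(\mu_{i0},\mu_{i1}>0\), \(\mu_{i0}+\mu_{i1}<1\) of Theorem~\ref{thm:decomp}, but the regeneration argument of that theorem still goes through at both endpoints.
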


\begin{proof}

With \(\mu_{i0}=\mu_{i1}=\mu\), \eqref{eq:pi} gives
\(p_i=(1-2\mu)\phi_i(\delta_i)+\mu\). Averaging over \(i\) using~\eqref{eq:pC-mean} gives
the stated expression for \(\pC\); affinity in \(\mu\), the two endpoints, and the
derivative are immediate.

\end{proof}

Here \(\Phi\) is the cooperation level of the mutation-free dynamics
of~\cite{couto2023multiplayer}. Mutation interpolates linearly between this
selection-driven value and neutrality, reaching \(\pC=\tfrac{1}{2}\) as
\(\mu\to\tfrac{1}{2}\): noise erases the influence of the payoffs.

\section{The heterogeneous public goods game}\label{sec:payoff}

The \emph{heterogeneous public goods game}~\cite{hauser2019social} is played by \(N\)
players with player-specific contributions \(\alpha_1,\ldots,\alpha_N>0\) and public goods
multipliers \(r_1,\ldots,r_N>1\). The payoff to player \(i\) in state
\(\bfa=(a_1,\ldots,a_N)\) is
\begin{equation}\label{eq:payoff}
  f_i(\bfa) = \frac{1}{N}\sum_{j=1}^N r_j\alpha_j a_j - \alpha_i a_i.
\end{equation}
The first term is a public good shared equally by all players, in which each
contribution \(\alpha_j a_j\) is scaled by the contributor's multiplier \(r_j\); the
second is player \(i\)'s own contribution cost. We write
\(P(\bfa)=\frac{1}{N}\sum_j r_j\alpha_j a_j\) for the shared pool.

The linear public goods game is additive~\cite{couto2023multiplayer}, and the
heterogeneous version~\eqref{eq:payoff}, with player-specific multipliers \(r_i\), is
additive for the same reason.

\begin{lemma}[Additivity of the PGG]
\label{lem:payoff-identity}

The payoff~\eqref{eq:payoff} is additive for every player \(i\), with
\begin{equation}\label{eq:delta-pi}
  \delta_i = \alpha_i\!\left(1-\frac{r_i}{N}\right),
\end{equation}
i.e.\ \(\Delta f_i(\bfa)=(1-2a_i)\,\alpha_i(1-r_i/N)\), independent of \(\bfa_{-i}\).
\end{lemma}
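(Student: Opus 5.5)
The plan is to compute \(\Delta f_i(\bfa)=f_i(\bfa)-f_i(\bfa_{i\to\bar a_i})\) directly from~\eqref{eq:payoff}. We split \(f_i\) into the shared pool \(P(\bfa)\) and the private cost \(-\alpha_i a_i\), and track how each changes when only coordinate \(i\) is flipped.

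\textbf{Step 1: the pool.} Flipping \(a_i\) to \(\bar a_i=1-a_i\) leaves every summand \(r_j\alpha_j a_j\) with \(j\neq i\) unchanged. Hence
\[
  P(\bfa)-P(\bfa_{i\to\bar a_i})=\frac{r_i\alpha_i}{N}\bigl(a_i-(1-a_i)\bigr)=\frac{r_i\alpha_i}{N}(2a_i-1).
\]
All dependence on \(\bfa_{-i}\) cancels in this step; it is the only place where co-players' actions could enter.

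\textbf{Step 2: the cost.} Similarly,
\[
  -\alpha_i a_i+\alpha_i(1-a_i)=\alpha_i(1-2a_i).
\]

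\textbf{Step 3: combine.} Adding the two pieces gives
\[
  \Delta f_i(\bfa)=(1-2a_i)\,\alpha_i\Bigl(1-\frac{r_i}{N}\Bigr).
\]
This depends only on \(a_i\), so the game is additive for player \(i\) in the sense of Section~\ref{sec:setup}. Evaluating at \(a_i=0\) identifies \(\delta_i=\Delta f_i|_{a_i=0}=\alpha_i(1-r_i/N)\), which is~\eqref{eq:delta-pi} and agrees with the form~\eqref{eq:delta}.

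There is no genuine obstacle; the argument is bookkeeping. The one point needing care is the sign convention: \(\Delta f_i\) is current payoff minus flipped payoff, so the pool contribution enters with the sign \(2a_i-1\), opposite to that of the cost term.
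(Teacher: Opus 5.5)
Your proof is correct and follows essentially the same route as the paper: both observe that flipping \(a_i\) changes only player \(i\)'s own term in the pool \(P\), so \(\bfa_{-i}\) drops out, and then combine this with the change in the private cost to obtain \(\Delta f_i(\bfa)=(1-2a_i)\,\alpha_i(1-r_i/N)\). The only cosmetic difference is that you take the differences of the pool and cost terms separately, whereas the paper writes out both full payoffs and subtracts them.
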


\begin{proof}
When player \(i\) switches from \(a_i\) to \(\bar{a}_i=1-a_i\), only their own term in
the pool changes, so
\(P(\bfa_{i\to\bar{a}_i}) = P(\bfa)+\tfrac{1}{N}r_i\alpha_i(1-2a_i)\).
Expanding both payoffs:
\begin{align*}
  f_i(\bfa) &= P(\bfa)-\alpha_i a_i, \\
  f_i(\bfa_{i\to\bar{a}_i})
  &= P(\bfa)+\tfrac{1}{N}r_i\alpha_i(1-2a_i)-\alpha_i(1-a_i).
\end{align*}
Subtracting:
\begin{align*}
  \Delta f_i
  &= -\frac{r_i}{N}\alpha_i(1-2a_i) + \alpha_i(1-2a_i) \\
  &= \alpha_i(1-2a_i)\!\left(1-\frac{r_i}{N}\right).
\end{align*}
The pool \(P(\bfa)\) cancels entirely, confirming additivity with
\(\delta_i=\alpha_i(1-r_i/N)\).
\end{proof}

The incentive to switch therefore depends only on the player's own contribution
\(\alpha_i\) and on whether their multiplier exceeds the group size: \(\delta_i<0\),
an incentive to cooperate, precisely when \(r_i>N\). The other players' choices never
enter.

\begin{corollary}[Exact cooperation probability]\label{cor:exact-pC}

For the heterogeneous public goods game~\eqref{eq:payoff}, the long-run cooperation
probability is

\begin{equation}\label{eq:pC}
  \boxed{
  \pC = \frac{1}{N}\sum_{i=1}^{N}
        \left[\frac{1-\mu_{i0}-\mu_{i1}}{1+e^{\,\beta_i\alpha_i(1-r_i/N)}}+\mu_{i0}\right]
  }.
\end{equation}

\end{corollary}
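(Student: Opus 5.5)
The corollary follows by combining three earlier results. Lemma~\ref{lem:payoff-identity} establishes that the heterogeneous public goods game~\eqref{eq:payoff} is additive for every player $i$, with constant $\delta_i=\alpha_i(1-r_i/N)$. The hypotheses of Theorem~\ref{thm:decomp} are exactly additivity together with the standing assumptions $\beta_i>0$, $\mu_{i0},\mu_{i1}>0$ and $\mu_{i0}+\mu_{i1}<1$, so that theorem applies directly.

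The steps, in order, are as follows.

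\textbf{Step 1.} Invoke Lemma~\ref{lem:payoff-identity} to obtain additivity and the value of $\delta_i$.

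\textbf{Step 2.} Apply Theorem~\ref{thm:decomp}. This gives
\[
  p_i=\phi_i(\delta_i)(1-\mu_{i0}-\mu_{i1})+\mu_{i0}.
\]
Substituting $\phi_i(x)=(1+e^{\beta_i x})^{-1}$ and the value of $\delta_i$ yields
\[
  p_i=\frac{1-\mu_{i0}-\mu_{i1}}{1+e^{\beta_i\alpha_i(1-r_i/N)}}+\mu_{i0}.
\]

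\textbf{Step 3.} Use Theorem~\ref{thm:stationary} to identify the stationary distribution as the product measure~\eqref{eq:stationary}. Then compute $\pC=\pi\cdot s$ by linearity of expectation:
\[
  \pC=\frac{1}{N}\sum_i \mathbb{E}_\pi[a_i]=\frac{1}{N}\sum_i p_i,
\]
which is~\eqref{eq:pC-mean}. Strictly, only the marginals are needed here, not independence.

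\textbf{Step 4.} Substitute the expression for $p_i$ from Step~2 into~\eqref{eq:pC-mean} to obtain~\eqref{eq:pC}.

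Steps~2 to~4 are pure substitution, so the main point of care is Step~3. The cooperation probability is defined through the stationary distribution of the chain, so the plan is to cite the product form of Theorem~\ref{thm:stationary} and check that its marginal for player $i$ is $p_i$. Summing~\eqref{eq:stationary} over $\bfa_{-i}$ gives exactly this marginal, because each factor $p_j^{a_j}(1-p_j)^{1-a_j}$ sums to one over $a_j\in\{0,1\}$. Ergodicity, which is needed for the stationary distribution to be unique, also holds: the mutation probabilities are positive, so every single-player flip has positive probability.
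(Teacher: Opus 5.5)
Your proposal is correct and follows essentially the same route as the paper: take $\delta_i=\alpha_i(1-r_i/N)$ from Lemma~\ref{lem:payoff-identity}, substitute it into~\eqref{eq:pi}, and average via~\eqref{eq:pC-mean}. Your added checks, that the product measure has marginal $p_i$ and that the chain is irreducible, are sound details that the paper leaves implicit.
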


\begin{proof}

By Lemma~\ref{lem:payoff-identity}, \(\delta_i=\alpha_i(1-r_i/N)\). Substituting
\(\phi_i(\delta_i)=(1+e^{\beta_i\alpha_i(1-r_i/N)})^{-1}\) into~\eqref{eq:pi} and
averaging by~\eqref{eq:pC-mean} gives~\eqref{eq:pC}.

\end{proof}

Aggregate cooperation is thus the average across players of an individual logistic
response to their own incentive \(\delta_i\), adjusted for mutation; no interaction
terms between players survive.

For \(\mu_{i0}=\mu_{i1}=0\) and a single global selection intensity, \eqref{eq:pC} reduces
to the stationary cooperation level of the linear public goods game in Proposition~3
of~\cite{couto2023multiplayer}.

Figure~\ref{fig:validation} validates Corollary~\ref{cor:exact-pC} against exact
values and simulation across group sizes.
Table~\ref{tab:stationary} verifies Theorem~\ref{thm:stationary}: for \(N=3\) with
per-player multipliers \(r_1=1\), \(r_2=3=N\), \(r_3=9\), contributions
\(\alpha_1=1\), \(\alpha_2=2\), \(\alpha_3=3\), \(\beta=2\), and \(\mu_{i0}=\mu_{i1}=0.1\),
the individual cooperation probabilities are \(p_1\approx0.267\), \(p_2=0.500\),
\(p_3\approx0.900\), and the formula and exact values of \(\pi_{\mathbf{a}}\)
agree to four decimal places for all eight states.

\begin{figure}[ht]
  \centering
  \includegraphics[width=\textwidth]{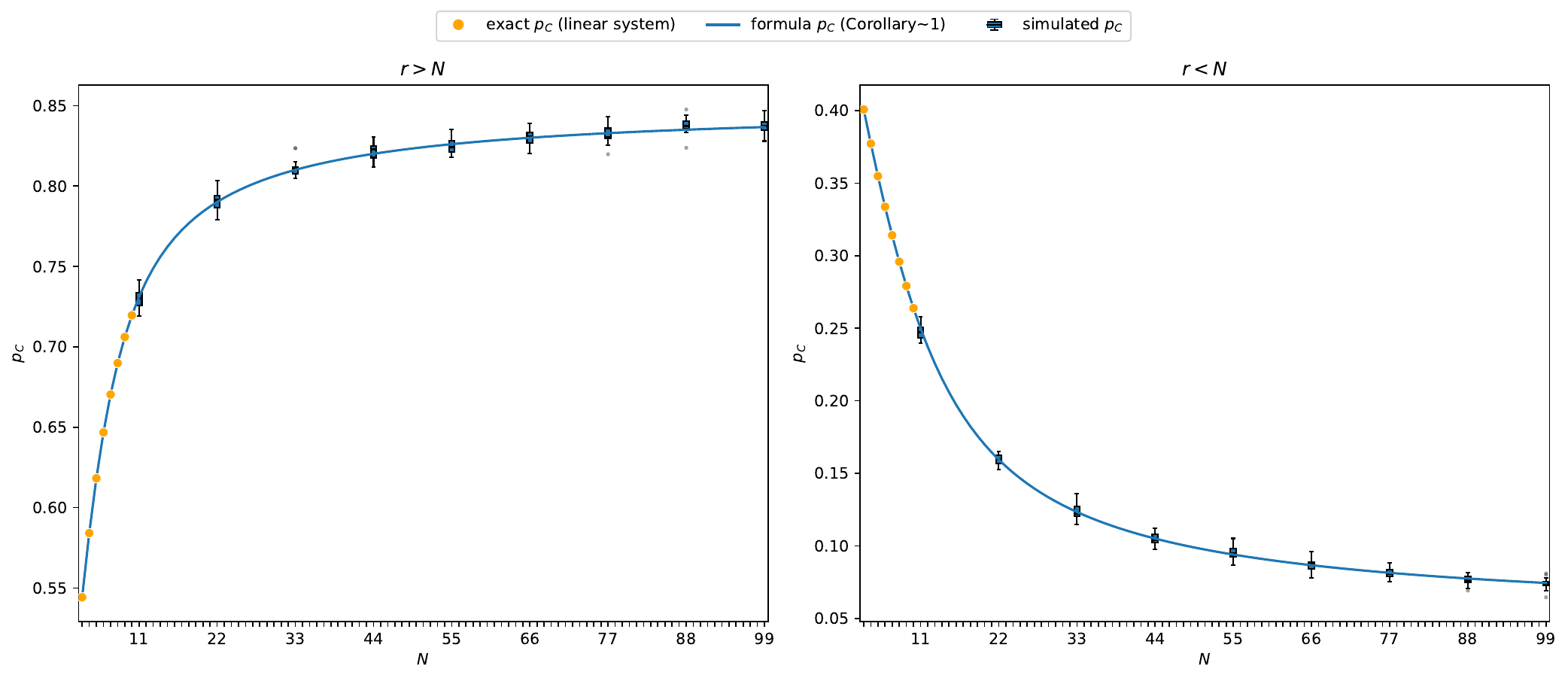}
  \caption{Validation of formula~\eqref{eq:pC} on a heterogeneous group with
    \(\alpha_i=i\), \(\beta=0.5\), \(\mu_{i0}=0.05\), \(\mu_{i1}=0.15\).
    \textit{Left:} \(r=2N\) (so \(r>N\) for all \(N\)).
    \textit{Right:} \(r=N/2\) (so \(r<N\) for all \(N\)).
    Orange dots: exact \(p_C\) from the full \(2^N\) linear system (feasible only
    for small \(N\)).
    Blue curve: formula~\eqref{eq:pC} (Corollary~\ref{cor:exact-pC}).
    Blue boxes: distribution of \(p_C\) across 19 independent runs of 5{,}000
    simulation steps each (with the first 500 steps discarded as warm-up;
    outliers shown).
    All values obtained using~\cite{harry_foster_2026_19495532}.}
  \label{fig:validation}
\end{figure}

\begin{table}[ht]
  \centering
  \begin{tabular}{lrr}
\toprule
State $\mathbf{a}$ & Formula $\pi_{\mathbf{a}}$ & Exact $\pi_{\mathbf{a}}$ \\
\midrule
DDD & 0.0367 & 0.0367 \\
DDC & 0.3299 & 0.3299 \\
DCD & 0.0367 & 0.0367 \\
DCC & 0.3299 & 0.3299 \\
CDD & 0.0133 & 0.0133 \\
CDC & 0.1201 & 0.1201 \\
CCD & 0.0133 & 0.0133 \\
CCC & 0.1201 & 0.1201 \\
\bottomrule
\end{tabular}

  \caption{Stationary-distribution probabilities \(\pi_{\mathbf{a}}\) for all
    eight states \(\mathbf{a}\in\{D,C\}^3\) with \(N=3\), \(\alpha_1=1\),
    \(\alpha_2=2\), \(\alpha_3=3\), \(\beta=2\), \(\mu_{i0}=\mu_{i1}=0.1\), and
    per-player multipliers
    \(r_1=1{<}N{=}r_2{=}3{<}r_3{=}9\).
    Formula: product measure~\eqref{eq:stationary} (Theorem~\ref{thm:stationary});
    Exact: values from the \(2^N\) linear system obtained
    using~\cite{harry_foster_2026_19495532}.
    Both columns to four decimal places.}
  \label{tab:stationary}
\end{table}

We now specialise the structural consequences of Section~\ref{sec:consequences} to the
public goods game, where \(\delta_i=\alpha_i(1-r_i/N)\) makes every quantity explicit in
the group size and the players' own parameters.

For the PGG, \(\delta_i<0\) precisely when \(r_i>N\), so
the symmetric-mutation threshold of Corollary~\ref{cor:threshold} becomes \(r_i>N\): a
player cooperates the majority of the time exactly when their personal multiplier on the
public good exceeds the group size, that is, when they extract more from the pool per unit
contributed than they put in. Without mutation this is the cooperation-dominance condition
of~\cite{couto2023multiplayer} (their \(c_i<b_i/N\)). The centre panel of
Figure~\ref{fig:panel} illustrates this, with curves at \(r>N\) lying above
\(p_i=\tfrac{1}{2}\) and curves at \(r<N\) lying below it. Since the comparison is between
\(r_i\) and \(N\), a player with a fixed multiplier becomes a non-benefactor once the
group exceeds \(r_i\): cooperation in large groups therefore requires multipliers that
grow with \(N\). For a non-beneficiary (\(r_i<N\), so \(\delta_i>0\)) a cooperative
mutation bias \(\mu_{i0}>\mu_{i1}\) can still raise \(p_i\) above \(\tfrac{1}{2}\), as in
Corollary~\ref{cor:threshold}.

At \(\beta_i=0\) the payoff parameters drop out
(Remark~\ref{rem:neutral}); the right panel of Figure~\ref{fig:panel} shows
\(p_i=\tfrac{1}{2}\) for every \(\alpha_i\) and \(r_i\), and the left panel shows the
curves converging to \(p_i=\tfrac{1}{2}\) at \(\beta_i=0\) before diverging in the
direction set by \(r_i\) relative to \(N\).

\begin{figure}[ht]
  \centering
  \includegraphics[width=\textwidth]{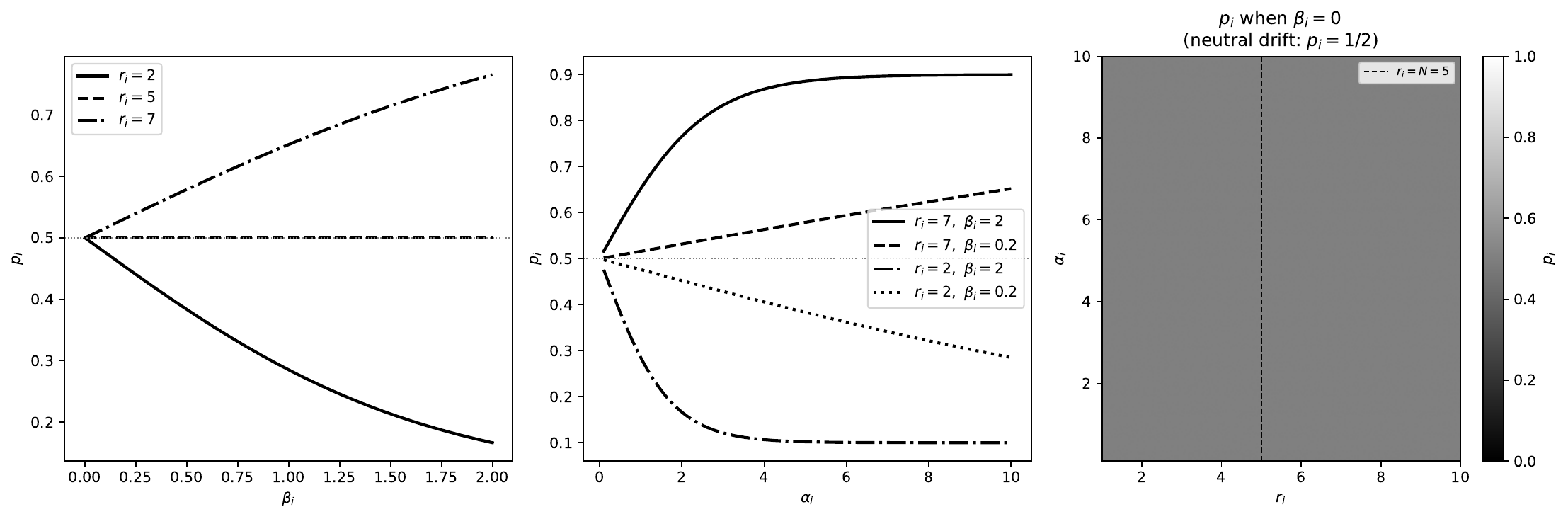}
  \caption{Structural properties of \(p_i\) for a single player with \(N = 5\),
    \(\mu_{i0} = \mu_{i1} = 0.1\).
    \textit{Left:} \(p_i\) as a function of \(\beta_i\) (\(\alpha_i = 2\))
    for \(r_i < N\), \(r_i = N\), and \(r_i > N\); all curves pass through
    \(p_i = \tfrac{1}{2}\) at \(\beta_i = 0\) (Remark~\ref{rem:neutral}).
    \textit{Centre:} \(p_i\) as a function of \(\alpha_i\) for two values of
    \(r_i\) and \(\beta_i\); curves above (below) \(\tfrac{1}{2}\) correspond
    to \(r_i > N\) (\(r_i < N\)).
    \textit{Right:} \(p_i(\alpha_i, r_i)\) at \(\beta_i = 0\), confirming
    \(p_i = \tfrac{1}{2}\) everywhere (Remark~\ref{rem:neutral}); the dashed
    line marks \(r_i = N\).}
  \label{fig:panel}
\end{figure}

For the PGG, \(\phi_i(\delta_i)>\tfrac{1}{2}\iff
r_i>N\), so the sign of \(\partial\pC/\partial\mu\) in Proposition~\ref{prop:balance} is
set by whether the group is, on average, beneficiaries: common mutation raises aggregate
cooperation when most players have \(r_i<N\) and lowers it when most have \(r_i>N\).

Figure~\ref{fig:mutation} isolates the
role of mutation for a single player. A larger selection intensity \(\beta_i\) makes the
player respond more sharply to payoff differences, adopting the higher-payoff action with
greater probability, whereas \(\beta_i=0\) is payoff-blind random choice. Without mutation
\(p_i\) is driven to \(0\) or \(1\) as \(\beta_i\) grows, whereas mutation confines
\(p_i\) to \([\mu_{i0},1-\mu_{i1}]\) (Proposition~\ref{prop:strong}). The two panels show
the same player (\(r_i=7\), \(\alpha_i=2\)) at \(N=5\) and \(N=200\): increasing the group
size alone turns the benefactor (\(r_i>N\), \(p_i\) rising to \(1-\mu\)) into a
non-benefactor (\(r_i<N\), \(p_i\) falling to \(\mu\)).

\begin{figure}[ht]
  \centering
  \includegraphics[width=\textwidth]{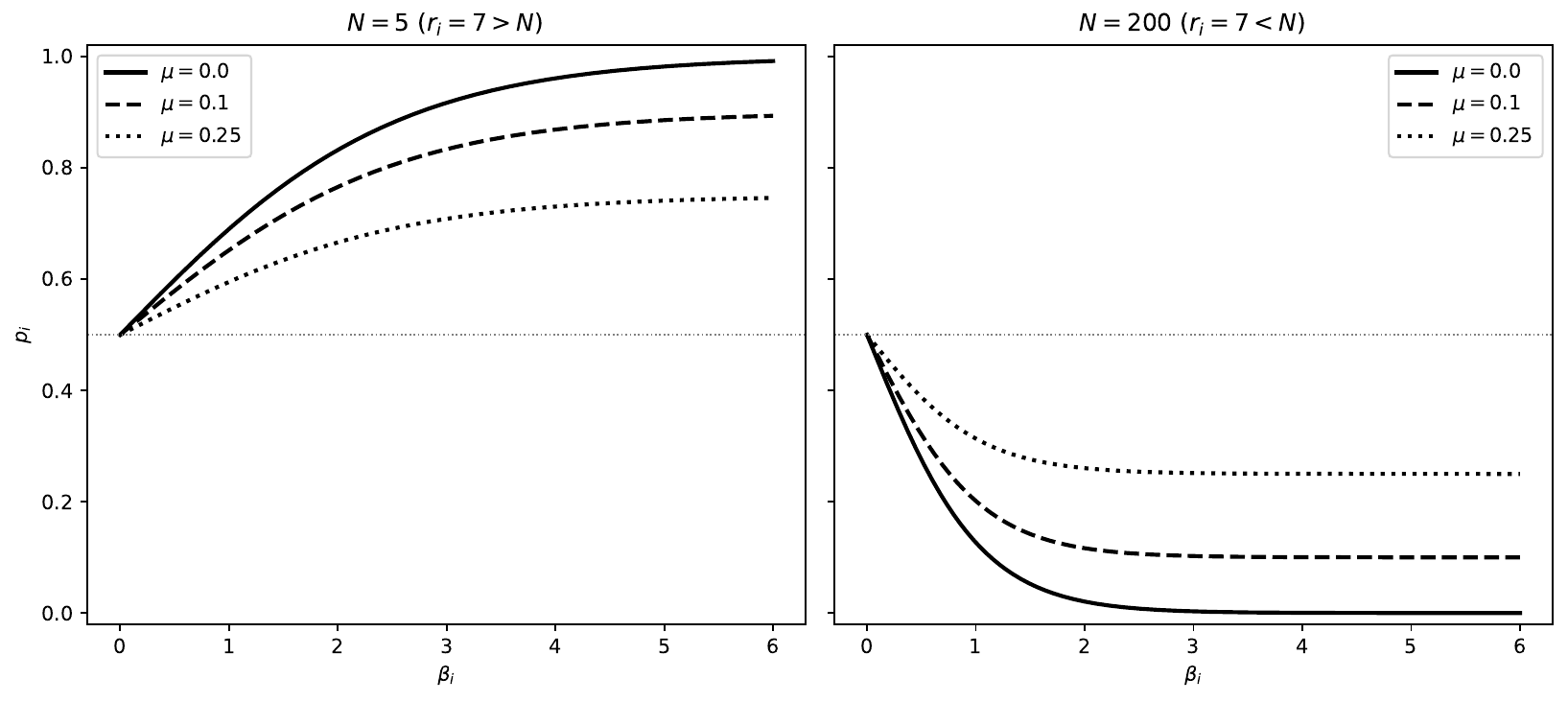}
  \caption{Effect of mutation on the cooperation probability of a single player with
    multiplier \(r_i=7\) and contribution \(\alpha_i=2\), as a function of the selection
    intensity \(\beta_i\) for symmetric mutation \(\mu\in\{0,0.1,0.25\}\), at two group
    sizes.
    \textit{Left:} \(N=5\), so \(r_i>N\) and the player is a benefactor; \(p_i\) rises
    towards \(1-\mu\) as \(\beta_i\) grows.
    \textit{Right:} \(N=200\), so the same player now has \(r_i<N\) and is a
    non-benefactor; \(p_i\) falls towards \(\mu\).
    In both panels the \(\mu=0\) curve is the mutation-free dynamics
    of~\cite{couto2023multiplayer}, reaching \(1\) or \(0\); mutation confines \(p_i\) to
    \([\mu,1-\mu]\). Increasing the group size alone turns the benefactor into a
    non-benefactor.}
  \label{fig:mutation}
\end{figure}

\begin{remark}[Role of heterogeneity]\label{rem:heterogeneity}

For the PGG with \(r_i\neq N\), both \(\partial p_i/\partial\alpha_i\) and
\(\partial p_i/\partial\beta_i\) are positive when \(r_i>N\) and negative when \(r_i<N\):
each derivative carries the factor \((1-r_i/N)\) multiplied by the negative quantity
\(\phi_i'\), so its sign is that of \(r_i/N-1\). Indeed, \(\beta_i\) and \(\alpha_i\) enter
\(p_i\) only through the product \(\beta_i\alpha_i\), so a change in selection intensity has
the same effect as a proportional change in contribution. Under symmetric mutation, raising
the mutation rate instead pulls \(p_i\) towards \(\tfrac{1}{2}\), lowering cooperation when
\(r_i>N\) and raising it when \(r_i<N\); with asymmetric mutation \(p_i\) is pulled towards
its neutral value \((1+\mu_{i0}-\mu_{i1})/2\).

\end{remark}

A player who benefits from the public good (\(r_i>N\)) cooperates more as they contribute
more or respond more strongly to payoff differences; a non-benefactor (\(r_i<N\)) does the
opposite. Each player's long-run cooperation is thus determined entirely by their own
\((\alpha_i,\beta_i,r_i,\mu_{i0},\mu_{i1})\) and the group size \(N\); the other players'
parameters do not enter.

\section{Conclusion}\label{sec:conclusion}

Building on the additive-game results of~\cite{couto2023multiplayer}, we have shown that
introspection dynamics with mutation on any additive game remains exactly solvable. When
\(\Delta f_i\) depends only on \(a_i\), each selection of player \(i\) places them in state
\(C\) with probability \(p_i\) regardless of current state (Theorem~\ref{thm:decomp}), and
the stationary distribution is a product measure (Theorem~\ref{thm:stationary}); the
product-measure structure, due to~\cite{couto2023multiplayer} for the no-mutation case, is
preserved under mutation. For the heterogeneous public
goods game, the linear payoff structure implies additivity
(Lemma~\ref{lem:payoff-identity}), and the long-run cooperation probability has a closed
form with \(N\) terms rather than \(2^N\) (Corollary~\ref{cor:exact-pC}).

The approach does not extend to games that are not additive. When
\(\Delta f_i(\bfa)\) depends on \(\bfa_{-i}\), the per-player chains are coupled and
the stationary distribution is no longer a product measure; the full \(2^N\) linear
system must be solved in general. Games such as the Stag Hunt (\(M_2\) of
(\ref{eq:games_from_couto_2022})) contain payoff cross
terms \(a_i a_{-i}\) that cannot be eliminated by rescaling, so no reformulation
brings them within the scope of the present results.

The source code, figures, and data for this paper are under version control
and available at~\cite{foster2026repo}, in line with best practices for
research software~\cite{wilson2017good}.

\section{Acknowledgements}
We thank Marta C. Couto and Saptarshi Pal for pointing out that the product-measure
decomposition for additive games was already established in~\cite{couto2023multiplayer},
which an earlier version of this paper had failed to acknowledge.
Harry Foster's research was supported by EPSRC grant EP/Z535126/1.

\bibliographystyle{plain}
\bibliography{bibliography}

\end{document}